\documentclass[10pt,journal]{IEEEtran}

\usepackage{amsmath,amssymb,amsthm,mathtools}
\usepackage{graphicx}
\usepackage{booktabs}
\usepackage{tikz}
\usepackage{cite}
\usepackage{url}
\usepackage{microtype}
\usepackage{balance}
\usetikzlibrary{arrows.meta,positioning,calc}

\newtheorem{theorem}{Theorem}
\newtheorem{proposition}{Proposition}

\newtheorem{corollary}{Corollary}

\newcommand{\Ccal}{\mathcal{C}}
\newcommand{\Rcal}{\mathcal{R}}
\newcommand{\Ecal}{\mathcal{E}}
\newcommand{\Pe}{\operatorname{Pe}}
\newcommand{\clconv}{\operatorname{cl\,conv}}
\newcommand{\dd}{\mathrm{d}}
\newcommand{\E}{\mathbb{E}}
\newcommand{\Prob}{\mathbb{P}}

\newcommand{\phin}{\phi}
\newcommand{\Phin}{\Phi}

\title{Fast--Slow Communication with Endogenous Transport}

\author{Lav R. Varshney%
\thanks{L. R. Varshney is with the AI Innovation Institute and the Department of Electrical and Computer Engineering, Stony Brook University, Stony Brook, NY 11794 USA (e-mail: lav.varshney@stonybrook.edu).}\thanks{ChatGPT 5.6-Sol was used to support implementation of computations, writing, and drawing figures.}}

\begin{document}
\maketitle

\begin{abstract}
A communication system may convey urgent information through a fast physical stream and more specific information through a slower material stream. In several biological and engineered settings, however, the fast process also changes the transport law of the slow one. We study this architecture under a shared resource constraint, with a strictly increasing concave fast-channel capacity--cost function and a deadline-constrained slow molecular channel. We first characterize the capacity region under separated message routing and message-independent operating-point schedules, and identify the marginal criterion for complementarity rather than competition between the streams. For one-dimensional drift diffusion, we prove that arrival probability before a deadline is strictly log-concave in P\'eclet number. For a distinguishable-token deadline-erasure channel, any increasing concave transport-actuation law then yields an exact single-crossing theorem: complementarity exists if and only if an initial transport-assistance elasticity exceeds one, the transition is unique when it exists, and the decreasing allocation branch remains the Pareto boundary after convexification. For positive baseline drift and sufficiently strong coupling, a unique critical normalized deadline determines when complementarity disappears. Short- and long-deadline limits clarify the associated temporal regimes. Numerical examples for a finite-frame LTI-Poisson slow channel exhibit analogous allocation behavior with counting noise and intersymbol interference.
\end{abstract}
\begin{IEEEkeywords}
Biological communication, capacity--cost function, deadline, drift diffusion, LTI-Poisson channel, molecular communication
\end{IEEEkeywords}

\section{Introduction}
\label{sec:intro}
Communication need not take place on a single physical time scale. A local event may launch a rapidly propagating pressure, electrical, or mechanical perturbation while also releasing slower chemical carriers. The fast stream can provide an early alert and then the slower stream can carry a more detailed message. This fast--slow communication architecture is especially intriguing when the two information streams are not just parallel \cite{GargSLZLWK2005}, but when the first process also changes the physical channel of the second.   

Wound-induced signaling in plants has exactly this structure.  Rather than electric potentials \cite{awan2019plant}, a wound can produce rapid hydromechanical pressure changes while poroelastic relaxation drives water flow that transports chemical elicitors through the xylem \cite{bacheva2025hydromechanical}. Mechanical perturbations can be decoded through mechanosensitive pathways, whereas molecular elicitors can engage ligand-gated or receptor-mediated pathways. The hydromechanical response is generic enough that it is unlikely, by itself, to distinguish the full range of upstream events and downstream responses. Yet it helps create the mass flow by which a chemically-specific signal travels. Thus the fast stream is simultaneously informative and transport-enabling.

Molecular communication has been studied through diffusion, advection, first-passage timing, receptor dynamics, counting noise, and channel memory \cite{berger2003living,gohari2016directions,pierobon2010physical,pierobon2013capacity, NakanoEH2024}. Positive drift leads to the additive inverse-Gaussian timing channel \cite{srinivas2012aign}; diffusion-induced memory leads naturally to LTI-Poisson models \cite{aminian2015lti}; and finite particle lifetimes create explicit useful-reception deadlines \cite{farsad2018finite}. There is also a growing body of work in which an electric field is used to accelerate molecular transport or reshape the channel impulse response \cite{ma2019electric,chou2022timevarying,cho2022electrophoretic,chou2026field}, establishing that transport assistance can improve capacity. Here, the assisting physical mechanism itself supports a fast information stream and may compete with molecular signaling for a common resource.

At a more abstract level, action-dependent-state channels allow an encoder action both to convey information and to influence a subsequent channel state \cite{weissman2010action}. The model here pursues a message-independent operating point that allocates resources between two orthogonal streams; the fast codeword does not control the slow channel in a symbol-by-symbol manner. As such, we demonstrate a novel capacity--cost phenomenon rather than needing new coding theorems (instead drawing on known coding theorems that establish operational interpretation for diffusion channels with memory \cite{hsieh2013foundations}).

The focus is asking when there starts to be a resource tradeoff between the fast stream and the slow stream. A unit of resource shifted toward fast signaling produces more fast information, but it may also improve slow transport enough to offset the molecular payload that was displaced. The resulting allocation curve can initially have positive slope: both rates rise together. Only after transport assistance saturates does  competition between the two streams reappear.  We do assume the resources for fast and slow are commensurable through an extensive quantity such as energy.

This paper investigates the tradeoff in three steps. First, it gives the exact capacity region for separated operating-point schedules and a criterion that applies to any increasing fast-channel capacity--cost function, thereby providing a universal architectural principle. Second, for a drift--diffusion channel with a reception deadline, it establishes structural properties of first-passage reliability that imply a unique complementarity-to-competition transition, an exact existence condition, and a critical normalized deadline. This argument generically shows that under an increasing concave transport law, the decreasing deterministic branch is concave in fast rate and hence remains the Pareto boundary of the full capacity region. Third, a finite-frame LTI-Poisson channel model incorporates stochastic molecule counts and within-frame intersymbol interference. Its numerically-evaluated capacity retains the same short-, intermediate-, and long-deadline behavior.

Note that the model here considers independent fast and slow messages, but the two streams might describe the same event at different levels of urgency and specificity in the motivating plant communication setting. This source coding aspect is especially important in control settings \cite{sahaiMitterUnstable}, but distinct from the channel question studied herein.

\section{Fast-Assisted Slow Communication}

Consider the following model, as depicted in Fig.~\ref{fig:architecture}.  As noted, we restrict to separated message routing: the fast message is encoded only in the fast input and decoded from the fast output, whereas the slow message is encoded only in the slow input and decoded from the slow output. Conditional on the operating point, the two channel laws factor, and their noises are independent. 

\subsection{Signaling episodes and a shared resource}
Communication uses $n$ independent signaling episodes. In each episode, the fast-channel output is available by time $T_f$, whereas the slow-channel output consists of observations collected through a later deadline $T>T_f$. Coding and block-message decoding are performed across episodes. Thus $T_f$ and $T$ constrain per-episode observation times rather than the decoding latency of an $n$-episode code.  Rates are measured in bits per episode and multiplication by an episode repetition frequency converts them to bits per unit time.

Independence of molecular episodes requires a physical reset. This can be realized by degradation or active clearance at the deadline, sufficiently long guard intervals, or episode-specific molecular labels.  Without a reset, late molecules create inter-episode memory and hence a different problem.

Let $\Gamma>0$ be a normalized resource budget per episode. An operating point allocates $a\in[0,\Gamma]$ to the fast mechanism and $b\ge 0$ to the slow mechanism, with
\begin{equation}
 a+b\le \Gamma.
 \label{eq:budget}
\end{equation}
Specific physical costs would be commensurated through $a$ and $b$. Depending on the application, $\Gamma$ may represent energy or similar extensive quantities that impose constraint.

The split schedule is deterministic, known to both terminals, and independent of the messages. It may vary across subblocks to implement time sharing. Allowing the split itself to carry information would be action-dependent-state coding \cite{weissman2010action}, which we do not consider.

\subsection{Generic fast and slow channels}
Let $W_f(y|x)$ be the fast memoryless channel with input cost $\rho_f(x)$. Its capacity--cost function is
\begin{equation}
 C_f(a)=\sup_{P_X:\,\E[\rho_f(X)]\le a} I(X;Y_f).
 \label{eq:Cf}
\end{equation}
As usual \cite{mceliece2002theory}, $C_f$ is nondecreasing and concave. We assume it is continuous, $C_f(0)=0$, and strictly increasing over the operating range. A mechanosensitive receptor channel, or conventional physical channels may all be represented by \eqref{eq:Cf}.

The resources put into the fast channel also establish a scalar slow-channel transport condition
\begin{equation}
 \nu=\nu(a),\qquad \nu'(a)\ge 0.
 \label{eq:nu}
\end{equation}
The quantity $\nu$ may be drift velocity, field strength, flow, or another monotone transport-quality parameter. Let $W_{s,T}^{(\nu)}$ denote the slow channel observed only through deadline $T$, and let
\begin{equation}
 C_s(\nu,b;T)
 \label{eq:Cs}
\end{equation}
be its capacity under slow-input cost $b$. Assume continuity and monotonicity in $\nu$ and $b$. The time-scale separation is represented by taking $\nu(a)$ to be a deterministic operating condition on the slow time scale; the detailed fast symbols average out rather than dynamically modulate the slow channel.

\begin{figure}
\centering
\begin{tikzpicture}[
 >=Latex,
 every node/.style={font=\footnotesize},
 block/.style={draw,rounded corners,minimum height=7.5mm,minimum width=25mm,align=center}
]
\node[block] (fast) {fast channel\\cost $a$};
\node[left=8mm of fast] (mf) {$M_f$};
\node[right=8mm of fast] (yf) {$Y_f$};
\node[block,below=7mm of fast] (state) {transport state $\nu(a)$};
\node[block,below=7mm of state] (slow) {deadline-truncated\\slow channel};
\node[left=8mm of slow] (ms) {$M_s$};
\node[right=8mm of slow] (ys) {$Y_s$};
\draw[->] (mf) -- (fast);
\draw[->] (fast) -- (yf);
\draw[->] (fast) -- node[right] {$a\mapsto\nu(a)$} (state);
\draw[->] (state) -- (slow);
\draw[->] (ms) -- node[below] {$b=\Gamma-a$} (slow);
\draw[->] (slow) -- (ys);
\end{tikzpicture}
\caption{Fast-assisted slow communication. Resource $a$ supports the fast stream and establishes the transport condition of the deadline-constrained slow stream; the remaining resource $\Gamma-a$ supports the slow payload.}
\label{fig:architecture}
\end{figure}

Again, note that as depicted in 
Fig.~\ref{fig:architecture}, $a$ enters twice: through the fast capacity $C_f(a)$ and through the slow transport state $\nu(a)$.
For fixed $a$, monotonicity in $b$ implies that the upper-right rate pair uses $b=\Gamma-a$. Define
\begin{equation}
 F(a)=C_f(a),\quad
 S_T(a)=C_s(\nu(a),\Gamma-a;T).
 \label{eq:FS}
\end{equation}

\begin{proposition}[Capacity region]
Under message-independent operating-point schedules, the fixed-split region is
\begin{equation}
 \Rcal(a)=[0,F(a)]\times[0,S_T(a)].
 \label{eq:rectangle}
\end{equation}
The capacity region is
\begin{equation}
 \Ccal_{\Gamma,T}
 =\clconv\!\left(\bigcup_{0\le a\le\Gamma}\Rcal(a)\right).
 \label{eq:capacityregion}
\end{equation}
\end{proposition}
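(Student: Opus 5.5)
The plan is a standard achievability/converse argument for two independent channels under a deterministic, message-independent schedule. The first step is the fixed-split rectangle. Fix $a$ and $b=\Gamma-a$; the slow transport state $\nu(a)$ is then a known constant. By the separated-routing assumption and the factorization of the channel laws given the operating point, the pair (fast input, fast output) is independent of the pair (slow input, slow output), and each message is encoded and decoded through only its own channel. Achievability of the corner $(F(a),S_T(a))$ follows by running a capacity--cost code for $W_f$ at cost $a$ and a code for $W_{s,T}^{(\nu(a))}$ at cost $\Gamma-a$ in parallel over the $n$ episodes. Here I would invoke the coding theorem for $C_f$ and the cited operational result for the deadline-truncated diffusion channel with memory. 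Independence of episodes, guaranteed by the physical reset, makes the slow channel a memoryless super-channel across episodes. Any point with smaller coordinates is then obtained by rate reduction, which gives the whole rectangle. For the converse, Fano's inequality applied to each decoder separately bounds $nR_f$ by $\sum_i I(X_{f,i};Y_{f,i})+n\epsilon_n$, and concavity of $C_f$ together with the averaged cost constraint gives $R_f\le F(a)$. The same argument, using the slow channel's capacity--cost function (concave in cost), gives $R_s\le C_s(\nu(a),b;T)\le S_T(a)$ by monotonicity in $b$.

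The second step extends this to arbitrary schedules. For achievability, time sharing over subblocks with splits $a_1,\dots,a_k$ and fractions $\lambda_j$ achieves every convex combination of rectangle points. Each subblock individually respects the per-episode budget \eqref{eq:budget}, so the combination is feasible. Taking closure, since achievable regions are closed by definition, gives $\clconv$ of the union.

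For the converse, a general schedule assigns episode $i$ a split $a_i$ that is fixed in advance. Fano's inequality gives
\[
R_f\le \tfrac1n\textstyle\sum_i C_f(a_i,\text{episode-}i)+\epsilon_n .
\]
The hard point, and what I expect to be the main obstacle, is that the cost constraint may be imposed only on average over a subblock, not per episode. In that case the per-episode capacities should be bounded using the per-episode average costs $a_i'$, $b_i'$, with $a_i'+b_i'\le\Gamma$. This yields
\[
(R_f,R_s)\le \tfrac1n\textstyle\sum_i\bigl(F(a_i'),\,C_s(\nu(a_i),b_i';T)\bigr)+\epsilon_n .
\]
Each summand is dominated by a point of $\Rcal(a_i)$, because $b_i'\le\Gamma-a_i$ and $C_s$ is monotone in $b$. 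The average therefore lies in the convex hull of the union, and letting $n\to\infty$ places the limit in the closure.

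Care is needed here: because $\nu$ depends on the scheduled $a_i$ rather than on the realized fast cost, I will interpret $a$ as the scheduled resource level, treat the fast input cost as constrained by it, and use monotonicity of $C_f$ to absorb any slack. With the per-episode budget interpretation, this completes the converse.
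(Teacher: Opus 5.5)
Your proposal is correct and follows essentially the same route as the paper: independent parallel codes achieve the rectangle $\Rcal(a)$, orthogonality of the streams and the known operating point bound each rate by its own capacity--cost function, time sharing over subblocks gives the convex hull, and closure handles limits. You add detail the paper's sketch omits, and your reading that the scheduled $a_i$ both sets $\nu(a_i)$ and is charged against the per-episode budget \eqref{eq:budget} is exactly the interpretation under which the statement holds (if costs are only codeword-averaged within a subblock, group episodes by scheduled split and apply concavity, as in your fixed-split step, rather than bounding episode by episode).
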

\begin{IEEEproof}
For fixed $a$, independent fast and slow codes achieve the Cartesian product in \eqref{eq:rectangle}. Conversely, orthogonality of the streams and the fixed known operating point bound each rate by its corresponding capacity--cost function. A deterministic schedule that uses several operating points yields the average of their rate pairs; hence time sharing gives the convex hull. Closure accounts for limiting schedules and boundary rates.
\end{IEEEproof}

This coding statement is direct, but what we aim to understand is the shape of the family of corners $(F(a),S_T(a))$ and which part remains after convexification in \eqref{eq:capacityregion}.

\section{Generic Capacity--Cost Geometry}
Assume for now that $C_f$, $C_s$, and $\nu$ are differentiable. Since $F'(a)=C_f'(a)>0$, shifting resource toward the fast mechanism always increases fast capacity. For the slow stream,
\begin{align}
 S_T'(a)
 &=C_{s,\nu}(\nu(a),\Gamma-a;T)\nu'(a)
   -C_{s,b}(\nu(a),\Gamma-a;T).
 \label{eq:Sprimegeneric}
\end{align}
The first term is the marginal slow-rate benefit created by improved transport. The second term is an opportunity cost of removing the same resource from the molecular input.

Let
\begin{equation}
 \Delta_T(a)=C_{s,\nu}\nu'(a)-C_{s,b},
 \label{eq:Delta}
\end{equation}
where the derivatives are evaluated at $(\nu(a),\Gamma-a;T)$.

\begin{theorem}[Marginal fast--slow criterion]
For every interior allocation,
\begin{equation}
 \frac{\dd S_T}{\dd F}
 =\frac{\Delta_T(a)}{C_f'(a)}.
 \label{eq:slope}
\end{equation}
Thus $\Delta_T(a)>0$ is a complementary regime in which additional fast investment raises both rates, whereas $\Delta_T(a)<0$ is a competitive regime in which fast rate rises and slow rate falls. 
\end{theorem}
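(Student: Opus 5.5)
The plan is to treat the family of corner points $(F(a),S_T(a))$, $a\in(0,\Gamma)$, as a curve parametrized by the allocation $a$. The slope $\dd S_T/\dd F$ is then computed by the chain rule along this curve. First, I use the standing differentiability assumptions on $C_f$, $C_s$, and $\nu$, together with the interior condition $0<a<\Gamma$. This puts $(\nu(a),\Gamma-a)$ in the interior of the domain of $C_s$, so $S_T(a)=C_s(\nu(a),\Gamma-a;T)$ is differentiable. Its derivative is exactly \eqref{eq:Sprimegeneric}: the multivariable chain rule applied to the composition $a\mapsto(\nu(a),\Gamma-a)$ gives the $\nu$-partial times $\nu'(a)$ plus the $b$-partial times $-1$. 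By the definition \eqref{eq:Delta}, this derivative equals $\Delta_T(a)$.

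Second, I turn the parametric curve into a graph over $F$. Since $F'(a)=C_f'(a)>0$ by the strict monotonicity assumption on $C_f$, the map $a\mapsto F(a)$ is strictly increasing and differentiable on $(0,\Gamma)$. By the inverse function theorem it has a differentiable inverse $a=F^{-1}(R_f)$ near each interior point, with derivative $1/C_f'(a)$. Composing, $S_T\circ F^{-1}$ is differentiable, and the one-variable chain rule gives $\dd S_T/\dd F=S_T'(a)/F'(a)=\Delta_T(a)/C_f'(a)$, which is \eqref{eq:slope}.

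Third, I read off the sign interpretation. Since the denominator $C_f'(a)$ is strictly positive, the sign of the slope equals the sign of $\Delta_T(a)$. Moving $a$ upward always raises the fast rate $F$. If $\Delta_T(a)>0$, then $S_T$ is locally strictly increasing in $a$, so both corner coordinates rise together: this is the complementary regime. If $\Delta_T(a)<0$, the fast rate rises while the slow rate falls: this is the competitive regime.

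The only point requiring care is the strict positivity of $C_f'(a)$. Strict monotonicity of a concave differentiable function alone gives $C_f'\ge 0$. However, concavity makes $C_f'$ nonincreasing, so a zero derivative at some interior point would force $C_f'=0$ on the rest of the interval and contradict strict increase there. Hence $C_f'(a)>0$ on the interior of the operating range, and the division in \eqref{eq:slope} is legitimate.
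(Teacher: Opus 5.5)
Your proposal is correct and follows the paper's own argument: the chain rule applied to $(F(a),S_T(a))$, together with $F'(a)=C_f'(a)>0$, gives the slope formula and the sign interpretation. Your extra step deriving $C_f'>0$ on the interior from strict increase plus concavity is a harmless refinement of what the paper simply assumes.
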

\begin{IEEEproof}
Apply the chain rule to \eqref{eq:FS} and use $F'(a)>0$.
\end{IEEEproof}

Notice the fast capacity--cost function determines how quickly the curve moves horizontally in rate space, but it does not determine where complementarity ends and competition for resources starts. That transition is set entirely by slow-channel properties and resource opportunity cost. To obtain a complete characterization rather than just a local criterion, we now specialize the slow transport law.

\section{Deadline-Constrained Drift Diffusion}
We consider a particular class of slow molecular channels.

\subsection{First passage and dimensionless parameters}
Consider one-dimensional molecular motion from a transmitter at $0$ to an absorbing receiver at $L$,
\begin{equation}
 \dd Z_t=v\,\dd t+\sqrt{2D}\,\dd W_t,
 \qquad Z_0=0,
 \label{eq:sde}
\end{equation}
where $D>0$ is molecular diffusivity and $v\ge0$ is drift velocity. The first-passage time
\begin{equation}
 \tau=\inf\{t\ge0:Z_t=L\}
 \label{eq:tau}
\end{equation}
has inverse-Gaussian density \cite{srinivas2012aign}
\begin{equation}
 f_\tau(t|v)
 =\frac{L}{\sqrt{4\pi Dt^3}}
 \exp\!\left[-\frac{(L-vt)^2}{4Dt}\right].
 \label{eq:igpdf}
\end{equation}
The probability of arrival by deadline $T$ is
\begin{align}
 q_T(v)
 &=\Prob\{\tau\le T\}\nonumber\\
 &=\Phin\!\left(\frac{vT-L}{\sqrt{2DT}}\right)
 +e^{vL/D}\Phin\!\left(-\frac{vT+L}{\sqrt{2DT}}\right).
 \label{eq:qv}
\end{align}

Introduce the normalized deadline and P\'eclet number \cite{squires2005microfluidics}
\begin{equation}
 \theta=\frac{DT}{L^2},
 \quad p=\Pe=\frac{vL}{D}.
 \label{eq:dimensionless}
\end{equation}
The ratio $\theta$ compares the allowed response time to the diffusion time $L^2/D$, while $p$ compares directed advection to diffusion. Their product
\begin{equation}
 p\theta=\frac{vT}{L}
 \label{eq:advective_distance}
\end{equation}
measures the source-to-receiver distances traversed by mean advection during the available time.

In dimensionless form,
\begin{align}
 Q(\theta,p)
 &=\Phin(A)+e^p\Phin(B),
 \label{eq:Q}\\
 &A=\frac{p\theta-1}{\sqrt{2\theta}},
 \quad
 B=-\frac{p\theta+1}{\sqrt{2\theta}}.
 \label{eq:AB}
\end{align}
Thus $q_T(v)=Q(\theta,p)$. Define the logarithmic transport sensitivity
\begin{equation}
 \ell_\theta(p)=\frac{\partial}{\partial p}\log Q(\theta,p).
 \label{eq:ell}
\end{equation}

Now consider the geometry needed for a resource allocation theorem.
\begin{theorem}[Geometry of deadline reliability]
For every $\theta>0$ and $p\ge0$:
\begin{enumerate}
\item $Q$ is strictly increasing in $p$, with
\begin{equation}
 Q_p(\theta,p)=e^p\Phin(B)>0;
 \label{eq:Qp}
\end{equation}
\item
\begin{equation}
 0<\ell_\theta(p)\le\tfrac12,
 \label{eq:ellbound}
\end{equation}
with equality if and only if $p=0$;
\item $Q(\theta,p)$ is strictly log-concave in $p$, equivalently
\begin{equation}
 \frac{\partial}{\partial p}\ell_\theta(p)<0; \mbox{ and }
 \label{eq:logconcave}
\end{equation}
\item for every $p>0$, $\ell_\theta(p)$ is strictly decreasing in $\theta$, with
\begin{equation}
 \lim_{\theta\downarrow0}\ell_\theta(p)=\frac12,
 \qquad
 \lim_{\theta\to\infty}\ell_\theta(p)=0.
 \label{eq:elllimits}
\end{equation}
At $p=0$, $\ell_\theta(0)=1/2$ for all $\theta$.
\end{enumerate}
\end{theorem}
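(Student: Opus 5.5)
The plan rests on one Gaussian identity. A direct computation gives $A^2-B^2=-2p$, hence
\begin{equation*}
\phin(A)=e^{p}\phin(B),
\end{equation*}
and also $\partial A/\partial p=-\partial B/\partial p=\sqrt{\theta/2}$. Differentiating \eqref{eq:Q} then makes the two density terms cancel: $\phin(A)\sqrt{\theta/2}-e^{p}\phin(B)\sqrt{\theta/2}=0$. What remains is $Q_p=e^p\Phin(B)>0$, which is item~1.

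For the other items I will not work with $\ell_\theta$ directly but with the ratio
\begin{equation*}
R(\theta,p)=\frac{e^{-p}\Phin(A)}{\Phin(B)},\qquad \ell_\theta(p)=\frac{1}{1+R(\theta,p)}.
\end{equation*}
Items 2--4 then reduce to monotonicity statements about $R$. Throughout, $h(x)=\phin(x)/\Phin(x)$ denotes the reversed Mills ratio, and I use the standard bound $h(x)>-x$ for all real $x$.

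I will prove item~3 before item~2. Strict log-concavity is equivalent to $R$ strictly increasing in $p$, and
\begin{equation*}
\frac{\partial}{\partial p}\log R=-1+\sqrt{\theta/2}\,\bigl(h(A)+h(B)\bigr).
\end{equation*}
Apply $h(x)>-x$ to both terms: $h(A)>(1-p\theta)/\sqrt{2\theta}$ and $h(B)>(1+p\theta)/\sqrt{2\theta}$. Their sum exceeds $2/\sqrt{2\theta}$, so $\sqrt{\theta/2}\,(h(A)+h(B))>1$ and the derivative is positive for every $p\ge0$. For item~2, note that $A=B$ at $p=0$, so $R(\theta,0)=1$ and $\ell_\theta(0)=1/2$ for every $\theta$; this also gives the last sentence of item~4. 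Since $R$ is strictly increasing in $p$, we get $R>1$, and hence $0<\ell_\theta<1/2$, for all $p>0$.

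For the limits in item~4, fix $p>0$.
\begin{itemize}
\item As $\theta\downarrow0$, both $A$ and $B$ tend to $-\infty$. The tail asymptotic $\Phin(x)\sim\phin(x)/|x|$, combined with $\phin(A)=e^p\phin(B)$, gives $R\sim|B|/|A|=(1+p\theta)/(1-p\theta)\to1$. Hence $\ell_\theta\to1/2$.
\item As $\theta\to\infty$, $A\to+\infty$ and $B\to-\infty$, so $\Phin(A)\to1$ and $\Phin(B)\to0$. Hence $R\to\infty$ and $\ell_\theta\to0$.
\end{itemize}

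For monotonicity in $\theta$ I need $\partial_\theta\log R=h(A)A_\theta-h(B)B_\theta>0$, where
\begin{equation*}
A_\theta=\frac{1+p\theta}{2\theta\sqrt{2\theta}},\qquad B_\theta=\frac{1-p\theta}{2\theta\sqrt{2\theta}}.
\end{equation*}
If $p\theta\ge1$ then $B_\theta\le0<A_\theta$ and the claim is immediate. If $p\theta<1$, the identity $h(A)/h(B)=1/R$ turns the condition into
\begin{equation*}
R<\frac{1+p\theta}{1-p\theta},
\end{equation*}
that is, $R$ stays below its own $\theta\downarrow0$ limit. This is the step I expect to be the main obstacle.

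I would first try to show that $\Psi(\theta)=(1+p\theta)\Phin(B)-(1-p\theta)e^{-p}\Phin(A)$ is positive. Its limit as $\theta\downarrow0$ is $0$, by the same Mills asymptotics as above. I would then show $\Psi$ is increasing on $(0,1/p)$, differentiating and using $\phin(A)=e^p\phin(B)$ to merge the density terms. The resulting derivative should reduce to $p\bigl(\Phin(B)+e^{-p}\Phin(A)\bigr)$ plus a density term whose sign is fixed once $p\theta<1$. If this bookkeeping fails, the fallback is a sharper two-sided Mills bound of the form $h(x)<(\sqrt{x^2+4}-x)/2$ in place of $h(x)>-x$.
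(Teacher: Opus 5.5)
Your proof is correct, and the step you flagged in item 4 does close. It closes more cleanly than you predicted: the density term is not merely of fixed sign, it vanishes identically. Using $e^{-p}\phi(A)=\phi(B)$,
\[
\Psi'(\theta)=p\bigl(\Phi(B)+e^{-p}\Phi(A)\bigr)+\phi(B)\bigl[(1+p\theta)B_\theta-(1-p\theta)A_\theta\bigr],
\]
and by your own formulas for $A_\theta$ and $B_\theta$ the bracket is $\bigl[(1+p\theta)(1-p\theta)-(1-p\theta)(1+p\theta)\bigr]/(2\theta)^{3/2}=0$. Hence $\Psi'(\theta)=pe^{-p}Q(\theta,p)>0$ for every $\theta>0$.

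Also, $\Psi(0^+)=0$ is immediate because $\Phi(A),\Phi(B)\to0$, so no Mills asymptotics are needed. Therefore $\Psi>0$ for all $\theta$. The case split at $p\theta=1$ and the fallback Mills bound are both unnecessary. Once $\Psi>0$ is known, the squeeze $1<R<(1+p\theta)/(1-p\theta)$ also gives the $\theta\downarrow0$ limit without asymptotics. Until you actually write out this cancellation, though, item 4 is only a plan.

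Items 1 and 3 match the paper: the same density identity, and the same bound $h(x)>-x$ combined with $A+B=-\sqrt{2/\theta}$.

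Your route differs from the paper's on items 2 and 4.

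\begin{itemize}
\item For item 2, the paper proves $e^p\Phi(B)\le\Phi(A)$ directly from the monotonicity of $g(z)=e^{z^2/2}\Phi(-z)$. Your shortcut, item 3 plus $R(\theta,0)=1$, avoids that lemma entirely.
\item For item 4, the paper writes $Q(\theta,p)=\int_0^\theta f_p(u)\,du$ for the normalized first-passage density. It uses $\partial_p f_p=\tfrac12(1-pu)f_p$ to obtain $\ell_\theta(p)=\tfrac12-\tfrac p2\,\mu_p(\theta)$, where $\mu_p(\theta)=\E_p[U\mid U\le\theta]$. Monotonicity and the $\theta\downarrow0$ limit then follow from $0<\mu_p(\theta)<\theta$, and the $\theta\to\infty$ limit follows from $\E_p[U]=1/p$.
\end{itemize}

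The two arguments for item 4 rest on the same inequality: your $\Psi>0$ is exactly $\mu_p(\theta)<\theta$, since $\Psi(\theta)=pe^{-p}Q(\theta,p)\bigl(\theta-\mu_p(\theta)\bigr)=pe^{-p}\int_0^\theta Q(u,p)\,du$.

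The paper's route gives a physical reading: a longer deadline admits later arrivals, which are less sensitive to drift. It also handles all of item 4 at once. The price is justifying differentiation under the integral and computing the unconditional mean. Your route works only with closed-form Gaussian expressions and needs no interchange-of-limits arguments.
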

\begin{IEEEproof}
See Appendix~\ref{app:reliability}.
\end{IEEEproof}

Strict log-concavity is a key point, showing deadline reliability has diminishing fractional returns to P\'eclet number.

\subsection{Deadline-erasure molecular channel}
For a closed-form capacity law, consider distinguishable molecular tokens. Each token carries a symbol from an alphabet of size $M$. If it reaches the receiver by $T$, its identity is recovered; otherwise an erasure is declared and the token is removed or rendered irrelevant before the next episode. A token opportunity is therefore an $M$-ary erasure channel with success probability $Q(\theta,p)$ and capacity $Q(\theta,p)\log_2M$.

Let $b$ be the normalized number of token opportunities and set $\beta=\log_2M$. The slow capacity is
\begin{equation}
 C_s(p,b;\theta)=\beta b Q(\theta,p).
 \label{eq:erasurecapacity}
\end{equation}
Note the separation between payload and transport reliability. 

Let the P\'eclet number be generated by the fast operating point:
\begin{equation}
 p=p(a),
 \quad p'(a)>0,
 \quad p''(a)\le0.
 \label{eq:concavetransport}
\end{equation}
The concavity assumption represents diminishing transport returns to fast investment. Along the resource boundary,
\begin{equation}
 S_\theta(a)=\beta(\Gamma-a)Q(\theta,p(a)).
 \label{eq:Stheta}
\end{equation}
Define the transport-assistance elasticity
\begin{equation}
 \Ecal_\theta(a)
 = (\Gamma-a)p'(a)\ell_\theta(p(a)).
 \label{eq:elasticity}
\end{equation}
Then
\begin{equation}
 S_\theta'(a)
 =\beta Q(\theta,p(a))\,[\Ecal_\theta(a)-1].
 \label{eq:Sprimeelasticity}
\end{equation}
The elasticity compares the fractional gain in deadline success from one unit of fast resource with the one-for-one loss of slow payload.

\begin{theorem}[Exact transition and Pareto boundary]
Suppose \eqref{eq:concavetransport} holds and $C_f$ is strictly increasing and concave.
\begin{enumerate}
\item $\Ecal_\theta(a)$ is strictly decreasing on $[0,\Gamma)$.
\item A complementary interval exists if and only if
\begin{equation}
 \Gamma p'(0)\ell_\theta(p(0))>1.
 \label{eq:exactcondition}
\end{equation}
If \eqref{eq:exactcondition} does not hold, $S_\theta(a)$ is maximized at $a=0$. If it holds, there is a unique $a_c\in(0,\Gamma)$ satisfying
\begin{equation}
 \Ecal_\theta(a_c)=1,
 \label{eq:ac}
\end{equation}
with $S_\theta$ strictly increasing on $[0,a_c]$ and strictly decreasing on $[a_c,\Gamma]$.
\item Every split $a<a_c$ is Pareto-dominated by $a_c$.
\item On $[a_c,\Gamma]$, $S_\theta$ is strictly concave. Consequently, $S_\theta$ expressed as a function of $F=C_f(a)$ is concave, and the decreasing deterministic branch is the Pareto boundary of the convexified region \eqref{eq:capacityregion}.
\end{enumerate}
\end{theorem}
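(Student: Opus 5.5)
The plan is to prove the four parts in order, each resting on the geometry theorem for $Q$ (positivity and log-concavity of $\ell_\theta$) together with \eqref{eq:concavetransport}.

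\emph{Part 1.} Write $\Ecal_\theta(a)$ as the product of three factors: $(\Gamma-a)$, $p'(a)$, and $\ell_\theta(p(a))$. On $[0,\Gamma)$ all three are positive. The first is strictly decreasing. The second is nonincreasing because $p''\le 0$. The third is strictly decreasing: $p$ is increasing and, by part~3 of the geometry theorem, $\ell_\theta$ is strictly decreasing in $p$. A product of positive nonincreasing factors, at least one of them strictly decreasing, is strictly decreasing. If one prefers to differentiate, each of the three terms of the logarithmic derivative is $\le 0$, and the term $-1/(\Gamma-a)$ is strictly negative.

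\emph{Part 2.} Use \eqref{eq:Sprimeelasticity}, noting that $\beta Q>0$. Hence $\operatorname{sign} S_\theta'(a)=\operatorname{sign}(\Ecal_\theta(a)-1)$. We have $\Ecal_\theta(0)=\Gamma p'(0)\ell_\theta(p(0))$, and $\Ecal_\theta(a)\to 0$ as $a\uparrow\Gamma$, since $p'$ and $\ell_\theta$ are bounded there ($\ell_\theta\le 1/2$, and $p'$ is nonincreasing). Suppose $\Ecal_\theta(0)\le 1$. Then strict monotonicity gives $\Ecal_\theta<1$ on $(0,\Gamma)$, so $S_\theta$ is strictly decreasing and is maximized at $a=0$. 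Suppose instead $\Ecal_\theta(0)>1$. Continuity of $\Ecal_\theta$ follows from continuity of $p'$ (assumed, since $p''$ exists) and of $\ell_\theta$. The intermediate value theorem then gives a root $a_c\in(0,\Gamma)$, and strict monotonicity makes it unique. The sign pattern of $S_\theta'$ gives the stated monotonicity of $S_\theta$ on each side of $a_c$.

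\emph{Part 3.} For $a<a_c$ we have $F(a)<F(a_c)$, because $C_f$ is strictly increasing. We also have $S_\theta(a)<S_\theta(a_c)$ by Part~2. So $a_c$ strictly dominates $a$ in both coordinates.

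\emph{Part 4.} This is the main obstacle. We need strict concavity of $S_\theta$ on $[a_c,\Gamma]$ from only first-order information about $\Ecal_\theta$. Write $S_\theta'(a)=\beta g(a)h(a)$ with $g=Q(\theta,p(a))>0$ and $h=\Ecal_\theta-1$. Then
\[
S_\theta''=\beta\,(g'h+gh').
\]
On $[a_c,\Gamma)$ the factor $h$ is $\le 0$, while $g'=Q_p\,p'>0$ by \eqref{eq:Qp}, so $g'h\le 0$. Also $h'=\Ecal_\theta'<0$, provided we work with the strict derivative version of Part~1 (where the $-1/(\Gamma-a)$ term supplies strictness), so $gh'<0$. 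Hence $S_\theta''<0$ there. If second derivatives of $p$ are not to be assumed everywhere, I would instead argue that $S_\theta'=\beta g h$ is a product of a positive increasing function and a nonpositive strictly decreasing function. Such a product is strictly decreasing: for $a_1<a_2$,
\[
g(a_2)h(a_2)\le g(a_1)h(a_2)<g(a_1)h(a_1).
\]
A strictly decreasing derivative gives strict concavity.

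To reparametrize, set $a=C_f^{-1}(F)$. The inverse $C_f^{-1}$ is increasing and convex, because $C_f$ is increasing and concave. The map $S_\theta$ is concave and nonincreasing on this branch. Composing a concave nonincreasing function with a convex increasing function yields a concave function, so $F\mapsto S_\theta(C_f^{-1}(F))$ is concave and decreasing on $[F(a_c),F(\Gamma)]$.

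Finally, the region $\Ccal_{\Gamma,T}$ is the closed convex hull of the rectangles. Points with $a<a_c$ are dominated by the rectangle at $a_c$ (Part~3). The remaining corners lie on the graph of a concave decreasing function. Its hypograph, restricted to the positive quadrant and $F\le F(\Gamma)$, is already convex and contains every rectangle. Therefore no convex combination exceeds the curve, and the decreasing branch is exactly the Pareto boundary.
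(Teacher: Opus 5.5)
Your proposal is correct. Parts 1--3 match the paper: your product-of-monotone-factors argument for Part 1 is the integrated form of the paper's three-term formula for $\Ecal_\theta'$, and you also justify $\Ecal_\theta(a)\to 0$ as $a\uparrow\Gamma$, which the paper only asserts.

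Part 4 takes a different and leaner route. The paper differentiates $S_\theta$ twice directly, obtaining $S_\theta''/\beta=-2Q_pp'+bQ_pp''+bQ_{pp}(p')^2$. It then invokes log-concavity a second time, through $Q_{pp}/Q_p<\ell_\theta$ together with $\Ecal_\theta\le 1$. You instead factor $S_\theta'=\beta Q\,(\Ecal_\theta-1)$ and reuse Part 1, so that $S_\theta''=\beta[Q_pp'(\Ecal_\theta-1)+Q\Ecal_\theta']$, where both terms have the right sign on $[a_c,\Gamma)$. Expanding $Q\Ecal_\theta'$ with $Q\ell_\theta=Q_p$ recovers exactly the paper's expression. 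If you keep the constants, your version gives $S_\theta''/\beta\le Q\Ecal_\theta'\le -Q_pp'$, the same curvature bound the paper reaches, so nothing quantitative is lost.

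Two of your steps buy genuine generality. The first is your second variant: a positive increasing factor times a nonpositive strictly decreasing factor is strictly decreasing. The second is the composition rule for the reparametrization: a concave nonincreasing function composed with the convex increasing $C_f^{-1}$ is concave. Together they need neither $p''$ nor any differentiability of $C_f$. The paper's formula for $G''$ does need that differentiability, and the paper only gestures at the nonsmooth case via one-sided derivatives or subgradients. The paper's direct computation, in exchange, shows explicitly where log-concavity of $Q$ enters the curvature.

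One small precision point in your final paragraph concerns which hypograph contains every rectangle $\Rcal(a)$. It should be the hypograph of the branch function extended by the constant $S_\theta(a_c)$ on $[0,F(a_c)]$. That extension remains concave precisely because the branch is nonincreasing, so your convexification argument goes through unchanged.
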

\begin{IEEEproof}
See Appendix~\ref{app:allocation}.
\end{IEEEproof}

This theorem gives the phase structure beyond just the local derivative test. There can be at most one transition. When it exists, the positive-slope portion is not itself a capacity tradeoff: it is dominated by the point $a_c$. The optimality frontier begins where the slow rate is maximized and then follows the concave decreasing branch.

For an affine transport law,
\begin{equation}
 p(a)=p_0+\eta a,
 \qquad \eta>0,
 \label{eq:affine}
\end{equation}
the exact condition is
\begin{equation}
 \Gamma\eta\,\ell_\theta(p_0)>1.
 \label{eq:affineexact}
\end{equation}
The universal sensitivity bound \eqref{eq:ellbound} immediately yields the following.

\begin{corollary}[Coupling threshold]
For the affine law \eqref{eq:affine},
\begin{equation}
 \Gamma\eta>2
 \label{eq:necessarycoupling}
\end{equation}
 is necessary for a complementary interval at any deadline. If $p_0=0$, it is also sufficient for every finite $\theta$.
\end{corollary}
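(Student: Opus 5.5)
The plan is to read the corollary directly off the exact condition \eqref{eq:affineexact}, which specializes \eqref{eq:exactcondition} in the Theorem (Exact transition and Pareto boundary). For the affine law \eqref{eq:affine} we have $p'(0)=\eta$ and $p(0)=p_0$. Hence a complementary interval exists at normalized deadline $\theta$ if and only if $\Gamma\eta\,\ell_\theta(p_0)>1$. The affine law satisfies the hypotheses \eqref{eq:concavetransport} ($p'=\eta>0$, $p''=0$), so that theorem applies for every $\theta>0$.

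\emph{Necessity.} Suppose a complementary interval exists for some $\theta>0$. Then $\Gamma\eta\,\ell_\theta(p_0)>1$. The universal sensitivity bound \eqref{eq:ellbound} of the Theorem (Geometry of deadline reliability) gives $0<\ell_\theta(p_0)\le \tfrac12$. Therefore
\begin{equation*}
1<\Gamma\eta\,\ell_\theta(p_0)\le \tfrac12\Gamma\eta ,
\end{equation*}
so $\Gamma\eta>2$. The bound \eqref{eq:ellbound} holds uniformly in $\theta$ and $p_0\ge0$, so the conclusion is independent of the deadline. This is the meaning of ``at any deadline.''

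\emph{Sufficiency when $p_0=0$.} Part 4 of the geometry theorem states that $\ell_\theta(0)=1/2$ for every $\theta>0$. The exact condition then reads $\Gamma\eta/2>1$, which is equivalent to $\Gamma\eta>2$. Under this hypothesis the condition therefore holds for every finite $\theta>0$. The exact-transition theorem then gives a unique $a_c\in(0,\Gamma)$ with $S_\theta$ strictly increasing on $[0,a_c]$, which is the complementary interval.

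The argument has no real obstacle; it only chains earlier results. The one point to check is that equality in \eqref{eq:ellbound} occurs exactly at $p=0$. This is what makes the threshold $\Gamma\eta>2$ sharp when $p_0=0$. When $p_0>0$, we have $\ell_\theta(p_0)<\tfrac12$, so $\Gamma\eta>2$ is necessary but generally not sufficient. That is consistent with the statement, which claims sufficiency only for $p_0=0$.
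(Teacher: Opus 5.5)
Your proposal is correct and is the same argument the paper gives: specialize the exact condition to $\Gamma\eta\,\ell_\theta(p_0)>1$, use the bound $\ell_\theta\le\tfrac12$ for necessity, and use $\ell_\theta(0)=\tfrac12$ for sufficiency when $p_0=0$. One small correction to your closing remark: sufficiency at $p_0=0$ needs only the equality $\ell_\theta(0)=\tfrac12$, while the fact that equality holds \emph{only} at $p=0$ is what explains why $\Gamma\eta>2$ is not sufficient when $p_0>0$.
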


The strict inequality has a natural interpretation: the maximal fractional transport gain per unit fast resource must be large enough to overcome the immediate payload loss.

\subsection{A critical deadline}
For positive baseline drift, the deadline itself produces a sharp phase boundary.

\begin{theorem}[Critical normalized deadline]
Let $p_0=p(0)>0$ and $\kappa=\Gamma p'(0)$.
\begin{enumerate}
\item If $\kappa\le2$, no complementary interval exists for any $\theta>0$.
\item If $\kappa>2$, there is a unique $\theta_c\in(0,\infty)$ satisfying
\begin{equation}
 \kappa\ell_{\theta_c}(p_0)=1.
 \label{eq:thetac}
\end{equation}
Complementarity exists exactly for $0<\theta<\theta_c$.
\item For $\theta<\theta_c$, the transition allocation $a_c(\theta)$ is strictly decreasing in $\theta$ and
\begin{equation}
 \lim_{\theta\uparrow\theta_c}a_c(\theta)=0.
 \label{eq:aclimit}
\end{equation}
As $\theta\downarrow0$, $a_c(\theta)$ converges to the unique solution of
\begin{equation}
 (\Gamma-a)p'(a)=2.
 \label{eq:shortroot}
\end{equation}
\end{enumerate}
\end{theorem}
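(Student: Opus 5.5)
The plan is to reduce everything to the scalar function $g(\theta)=\kappa\,\ell_\theta(p_0)$ and then apply the exact condition \eqref{eq:exactcondition} of the Exact Transition theorem. That condition says a complementary interval exists if and only if $g(\theta)>1$.

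For part 1, the Geometry theorem gives $\ell_\theta(p_0)<\tfrac12$ strictly whenever $p_0>0$ (item 2). Hence $g(\theta)<\kappa/2\le1$ for every $\theta>0$, and complementarity fails at every deadline.

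For part 2, assume $\kappa>2$. By item 4 of the Geometry theorem, $\theta\mapsto\ell_\theta(p_0)$ is strictly decreasing, with limit $\tfrac12$ as $\theta\downarrow0$ and limit $0$ as $\theta\to\infty$. So $g$ decreases strictly from $\kappa/2>1$ toward $0$. Continuity of $\ell_\theta(p_0)$ in $\theta$ is clear from the closed form \eqref{eq:Q}, since $Q>0$ and is smooth in $(\theta,p)$. The intermediate value theorem therefore gives a unique $\theta_c$ with $g(\theta_c)=1$. Strict monotonicity then gives $g>1$ exactly on $(0,\theta_c)$, which is exactly the range where \eqref{eq:exactcondition} holds.

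For part 3, I would treat $a_c(\theta)$ as the root of $H(a,\theta)=\Ecal_\theta(a)-1$. For monotonicity I would avoid the implicit function theorem, since $\partial_a\Ecal$ is only known to be negative, not bounded away from zero, and use a comparison argument instead.
\begin{itemize}
\item Fix $\theta_1<\theta_2<\theta_c$. Because $p(a_c(\theta_1))\ge p_0>0$, item 4 gives $\Ecal_{\theta_2}(a_c(\theta_1))<\Ecal_{\theta_1}(a_c(\theta_1))=1$.
\item Since $\Ecal_{\theta_2}$ is strictly decreasing in $a$ (item 1 of the Exact Transition theorem), its root satisfies $a_c(\theta_2)<a_c(\theta_1)$.
\end{itemize}

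For the limit $\theta\uparrow\theta_c$, let $a^*=\lim a_c(\theta)\ge0$, which exists by monotonicity. Suppose $a^*>0$. Joint continuity of $\Ecal_\theta(a)$ in $(\theta,a)$ gives $\Ecal_{\theta_c}(a^*)=1$. But $\Ecal_{\theta_c}$ is strictly decreasing with $\Ecal_{\theta_c}(0)=g(\theta_c)=1$, so $\Ecal_{\theta_c}(a^*)<1$, a contradiction. The one point to check is that $a_c(\theta)$ stays in a compact subset of $[0,\Gamma)$. This holds because $a_c(\theta)\le a_c(\theta_0)<\Gamma$ for any fixed $\theta_0<\theta$.

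For the short-deadline limit, the main obstacle is that $\ell_\theta(p)\to\tfrac12$ only pointwise in $p$. I would first try to show this convergence is uniform for $p$ in compact subsets of $[p_0,p(\Gamma)]$. The route is Dini's theorem: $\ell_\theta(p)$ increases monotonically as $\theta\downarrow0$ (item 4), and both $\ell_\theta(p)$ and the limit $\tfrac12$ are continuous in $p$. Uniform convergence then gives $\Ecal_\theta\to\Ecal_0:=\tfrac12(\Gamma-a)p'(a)$ uniformly on $[0,\Gamma-\delta]$. The function $\Ecal_0$ is strictly decreasing, because $(\Gamma-a)$ is decreasing and positive while $p'>0$ is nonincreasing. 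It satisfies $\Ecal_0(0)=\kappa/2>1$ and $\Ecal_0\to0$ as $a\to\Gamma$. So \eqref{eq:shortroot} has a unique root $a_0\in(0,\Gamma)$, found by continuity, which requires $p'$ continuous. Take $p\in C^1$, as implicit in \eqref{eq:concavetransport}; if not, use the one-sided derivatives of the concave $p$.

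To conclude, for small $\varepsilon$ we have $\Ecal_0(a_0-\varepsilon)>1>\Ecal_0(a_0+\varepsilon)$. Uniform convergence transfers these inequalities to $\Ecal_\theta$ for all small $\theta$. Monotonicity of $\Ecal_\theta$ in $a$ then traps $a_c(\theta)$ in $(a_0-\varepsilon,a_0+\varepsilon)$. Monotonicity of $a_c(\theta)$ in $\theta$ guarantees that the limit exists in the first place.
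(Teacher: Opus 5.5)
Your proof is correct. Parts 1 and 2 follow the paper exactly: both reduce to the scalar $\kappa\ell_\theta(p_0)$ and read off the answer from items 2 and 4 of the reliability theorem.

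Part 3 takes a different and more elementary route. The paper gets strict decrease of $a_c(\theta)$ from the implicit function theorem, using $\partial_a\Ecal_\theta<0$ and $\partial_\theta\Ecal_\theta<0$, and simply asserts that both limits follow from continuity and the limits of $\ell_\theta$. Your comparison argument uses only two facts: the strict decrease of $\ell_\theta(p)$ in $\theta$ for $p>0$, and the strict decrease of $\Ecal_\theta$ in $a$. It therefore needs no differentiability in $\theta$ and no continuity of $p''$. The implicit-function route implicitly needs continuous $p''$ to make $\Ecal$ jointly $C^1$. In exchange, the implicit function theorem gives differentiability of $a_c(\theta)$ and the explicit formula $a_c'(\theta)=-\partial_\theta\Ecal_\theta/\partial_a\Ecal_\theta$.

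Your stated reason for avoiding the implicit function theorem is not the real one. The theorem only needs $\partial_a\Ecal_\theta\neq0$ at the root, not a bound away from zero; the genuine cost is the regularity just described.

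Your limit arguments fill in what the paper omits, and they are sound. The Dini step is unnecessary, though. The trapping argument only evaluates $\Ecal_\theta$ at the two fixed points $a_0\pm\varepsilon$, so pointwise convergence $\ell_\theta(p)\to\tfrac12$ from item 4 already suffices. The trapping itself also proves that the limit exists, so you do not need monotonicity of $a_c$ for that.
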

\begin{IEEEproof}
The exact existence condition is $\kappa\ell_\theta(p_0)>1$. Theorem~2 shows that, for $p_0>0$, $\ell_\theta(p_0)$ decreases continuously from $1/2$ to $0$, proving the first two claims. The strict decrease of $a_c(\theta)$ follows from the implicit function theorem because $\partial_a\Ecal_\theta<0$ and $\partial_\theta\Ecal_\theta<0$. The limits follow from continuity and using \eqref{eq:elllimits}.
\end{IEEEproof}

When $p_0=0$, $\ell_\theta(0)=1/2$ for every deadline. If $\Gamma p'(0)>2$, a positive complementary interval therefore persists for every finite $\theta$. It becomes a boundary layer: $a_c(\theta)\to0$ and its rate gain vanishes as $\theta\to\infty$.

\subsection{Fast--slow regimes}
The theorems can be interpreted in terms of three regimes, as shown in Table~\ref{tab:regimes}.

\emph{Short deadline: fast effectively alone.}
For bounded feasible P\'eclet number,
\begin{equation}
 \sup_{0\le a\le\Gamma}Q(\theta,p(a))\longrightarrow0
 \quad\text{as }\theta\downarrow0.
 \label{eq:shortQ}
\end{equation}
Thus the absolute slow rate vanishes even when its derivative with respect to fast assistance is positive. Early information is almost entirely fast.

\emph{Intermediate deadline: fast unlocks slow.}
When arrival is neither negligible nor saturated, $\ell_\theta(p)$ is large enough that \eqref{eq:exactcondition} can hold. The allocation curve first rises in both coordinates, reaches $a_c$, and only then becomes a tradeoff. In this regime, fast signaling is an enabling resource for the slow stream.

\emph{Long deadline: transport saturation and competition.}
Uniformly on bounded P\'eclet intervals,
\begin{equation}
 Q(\theta,p)\longrightarrow1
 \quad\text{as }\theta\to\infty.
 \label{eq:longQ}
\end{equation}
Hence
\begin{equation}
 S_\theta(a)\longrightarrow\beta(\Gamma-a)
 \label{eq:longS}
\end{equation}
uniformly. Transport assistance no longer creates meaningful rate, and resource competition remains. For convenience, Table~\ref{tab:regimes} summarizes the three regimes.

\begin{table}
\centering
\caption{Fast--slow regimes}
\label{tab:regimes}
\begin{tabular}{@{}lll@{}}
\toprule
Regime & Slow transport & Allocation behavior \\
\midrule
Short deadline & $Q\approx0$ & slow rate negligible \\
Intermediate & $Q$ sensitive to $p$ & $R_f\uparrow,\ R_s\uparrow$ before $a_c$ \\
Long deadline & $Q\approx1$ & $R_f\uparrow,\ R_s\downarrow$ \\
\bottomrule
\end{tabular}
\end{table}

\subsection{Analytical illustration}
For illustration, set $\Gamma=1$ and
\begin{equation}
 p(a)=1+8a.
 \label{eq:illustrativePe}
\end{equation}
The fast channel law remains generic to ensure universal insights, but to show the allocation curve in rate coordinates, Fig.~\ref{fig:analyticcurves} uses a representative concave reparameterization
\begin{equation}
 \widetilde C_f(a)=\frac{1-e^{-3a}}{1-e^{-3}}.
 \label{eq:representativefast}
\end{equation}

At $\theta=0.08$, the slow rate is small, although assistance raises it and the maximizer is $a_c\approx0.603$. At $\theta=0.20$, the positive-slope portion is substantial and $a_c\approx0.429$. At $\theta=20$, transport is effectively saturated and the maximizing fast allocation is zero. The circles in Fig.~\ref{fig:analyticcurves} mark the slow-rate maximizers from \eqref{eq:ac}.

\begin{figure}[t]
\centering
\includegraphics[width=\columnwidth]{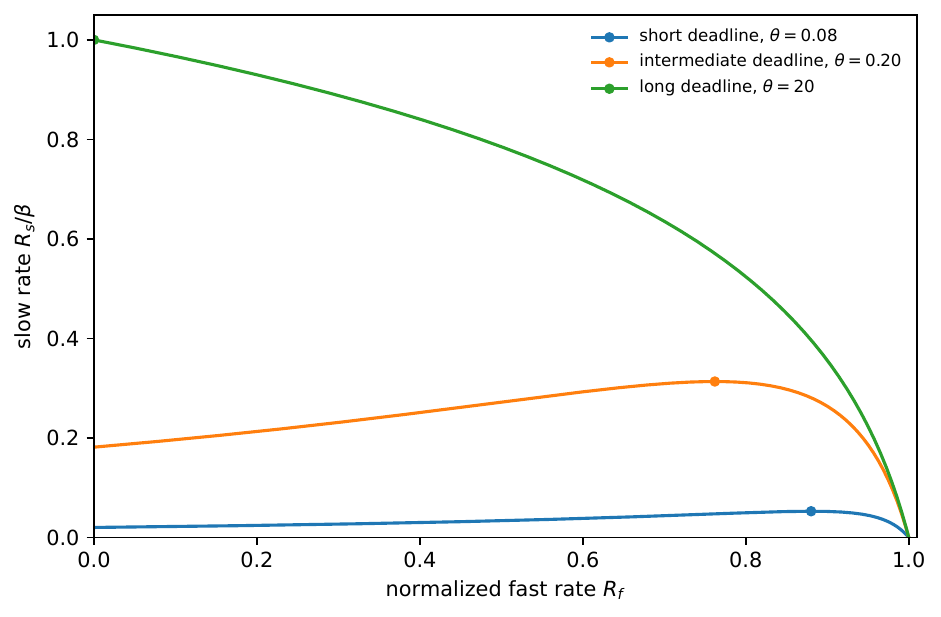}
\caption{Deterministic allocation curves for $\Gamma=1$ and $p(a)=1+8a$. Circles mark the unique slow-rate maximizer $a_c$. At short deadline the slow stream is small; at intermediate deadline the curve first rises in both rates and then falls; at long deadline ordinary competition dominates. The horizontal coordinate uses \eqref{eq:representativefast} only for visualization; the phenomenon is quite general.}
\label{fig:analyticcurves}
\end{figure}

The critical-deadline theorem is demonstrated in Fig.~\ref{fig:transition}. For these parameters,
\begin{equation}
 \theta_c\approx3.563.
 \label{eq:numericalthetac}
\end{equation}
The transition allocation decreases monotonically with the deadline and reaches zero at $\theta_c$. As $\theta\downarrow0$, it approaches $1-2/8=0.75$, in agreement with \eqref{eq:shortroot}.

\begin{figure}[t]
\centering
\includegraphics[width=\columnwidth]{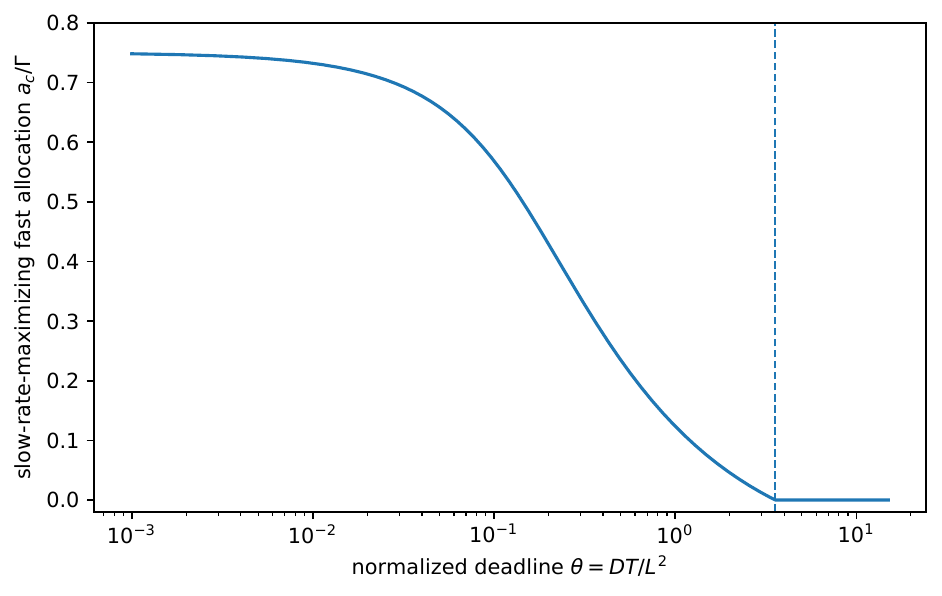}
\caption{Slow-rate-maximizing fast allocation versus normalized deadline for $\Gamma=1$ and $p(a)=1+8a$. The dashed line is the critical deadline $\theta_c\approx3.563$. More generous deadlines require less transport assistance, and beyond $\theta_c$ the slow rate is maximized at $a=0$.}
\label{fig:transition}
\end{figure}

\section{Finite-Frame LTI-Poisson Channel}
\label{sec:poisson}
The deadline-erasure model is helpful for illustrating the structural theorem, but its slow capacity is linear in molecular payload and does not contain counting noise or intersymbol interference. We therefore examine a finite-frame LTI-Poisson channel model from molecular communication \cite{aminian2015lti}.

Divide each episode into $m$ slots. For a molecule released at the beginning of a slot, define the probability of first arrival in the lag-$\ell$ slot by
\begin{equation}
 h_\ell(\theta,p)
 =Q\!\left(\frac{\ell\theta}{m},p\right)
  -Q\!\left(\frac{(\ell-1)\theta}{m},p\right),
 \quad \ell=1,\ldots,m.
 \label{eq:impulse}
\end{equation}
Let $X_j\in\{0,A\}$ be the mean number of molecules released in slot $j$. Poisson release and thinning give conditionally independent counts
\begin{equation}
 Y_k\mid X^m=x^m
 \sim\operatorname{Poisson}\!\left(
 \lambda_0+\sum_{j=1}^k h_{k-j+1}(\theta,p)x_j
 \right),
 \label{eq:poissonchannel}
\end{equation}
where $\lambda_0$ is dark current. Earlier releases contribute to later slots, so \eqref{eq:poissonchannel} contains within-frame intersymbol interference. Molecules arriving after the frame are cleared, implementing an episode reset.

Normalize the input cost as
\begin{equation}
 c(x^m)=\frac{1}{mA}\sum_{j=1}^m x_j.
 \label{eq:poissoncost}
\end{equation}
For transport state $p$ and slow budget $b$, the vector channel is a finite-input, countable-output DMC with per-slot capacity
\begin{equation}
 C_{\mathrm P}(p,b;\theta)
 =\frac1m\max_{P_{X^m}:\,\E[c(X^m)]\le b}
 I(X^m;Y^m).
 \label{eq:poissoncapacity}
\end{equation}
This can be optimized directly, see Appendix~\ref{app:numerics}.

For illustration, let $m=3$, $A=6$, $\lambda_0=0.15$, $\Gamma=0.35$, and
\begin{equation}
 p(a)=1+\frac{8}{\Gamma}a.
 \label{eq:poissonPe}
\end{equation}
The optimization in \eqref{eq:poissoncapacity} is concave in the input distribution. For numerical purposes, counts $0,\ldots,23$ are represented explicitly and all larger counts are placed in one tail bin. The largest conditional mean is at most $6.15$, for which the aggregated tail probability is below $3.9\times10^{-8}$.

Fig.~\ref{fig:poisson} shows $C_{\mathrm P}(p(a),\Gamma-a;\theta)$. Because every admissible $C_f(a)$ is increasing, an increasing portion in this plot corresponds to a positive-slope fast--slow allocation curve. The same three regimes reappear. For $\theta=0.08$, fast assistance raises capacity from $0.0044$ to $0.0761$ bits/slot, with the maximizing allocation near $0.78\Gamma$. For $\theta=0.20$, capacity rises from $0.1232$ to $0.2869$ bits/slot and peaks near $0.58\Gamma$. For $\theta=20$, capacity is maximized at $a=0$. Table~\ref{tab:poisson} gives the values.

\begin{figure}
\centering
\includegraphics[width=\columnwidth]{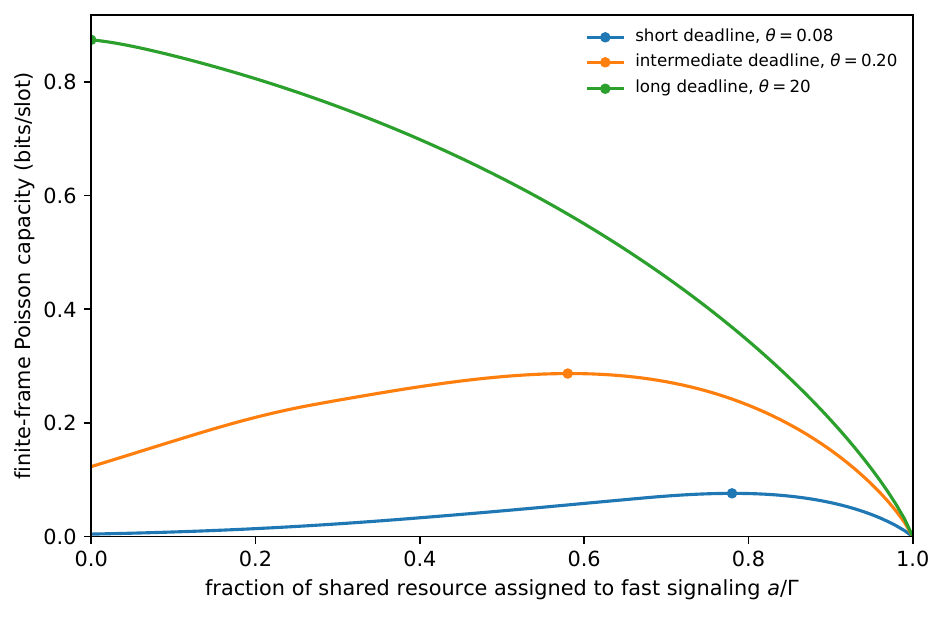}
\caption{Finite-frame LTI-Poisson capacity with counting noise and within-frame intersymbol interference. The slow capacity first rises and then falls at short and intermediate deadlines, whereas it decreases immediately at a long deadline. Circles mark the numerical maximizers.}
\label{fig:poisson}
\end{figure}

\begin{table}[t]
\centering
\caption{Finite-frame LTI-Poisson calculation}
\label{tab:poisson}
\begin{tabular}{@{}cccc@{}}
\toprule
$\theta$ & $C_{\mathrm P}(a=0)$ & $\max_a C_{\mathrm P}(a)$ & $a_{\max}/\Gamma$ \\
\midrule
$0.08$ & $0.0044$ & $0.0761$ & $0.78$ \\
$0.20$ & $0.1232$ & $0.2869$ & $0.58$ \\
$20$   & $0.8742$ & $0.8742$ & $0$ \\
\bottomrule
\end{tabular}
\end{table}

The complementarity we saw for the linear setting in \eqref{eq:erasurecapacity} is very much still present for the more complicated setting when the slow channel contains stochastic counting, a nonlinear capacity--cost function, background noise, and diffusion-induced memory within the signaling frame.

\section{Biological Interpretation}
The hydromechanical model of plant signaling mentioned in Sec.~\ref{sec:intro} motivates the separation of time scales we considered in this work. Pressure changes can move rapidly through xylem, while tissue relaxation and the associated mass flow act over longer time scales and can advect chemical elicitors to distant tissue \cite{bacheva2025hydromechanical}. In our abstraction, $a$ measures investment in the fast hydromechanical mechanism, and $p(a)$ measures the resulting quality of directed molecular transport.

The three regimes that emerged from mathematical analysis then have a natural interpretation, capturing generic structural facts about how one message-bearing process can improve the channel used by another. At very short response times, only the mechanical perturbation is plausibly available. At intermediate times, the same hydromechanical process that provides early information can enable better chemical transport. At long times, the molecular stream has arrived with high probability, so additional hydraulic assistance carries little marginal value and the two mechanisms compete for resources.

The same structure can arise beyond this specific form of plant communication. A pressure precursor can create flow for a chemical signal; an electric field can both communicate rapidly and electrophoretically assist molecular carriers; mechanical deformation can open a transport pathway; or a thermal or optical control signal can change diffusivity or reaction rates. Although field-assisted molecular communication already demonstrates the value of controlling transport \cite{ma2019electric,chou2022timevarying,cho2022electrophoretic,chou2026field}, here the assisting mechanism has its own communication role and shares a resource with the material stream.

\section{Conclusion}
Fast and slow communication do not necessarily begin in competition. When fast signaling also improves the transport law of a slower material channel, shifting resource toward the fast mechanism can initially increase both achievable rates. The sign of this effect is governed by  comparing between transport benefit and molecular opportunity cost and is universal in the sense of independent of the detailed fast-channel law.

For deadline-constrained drift diffusion, strict log-concavity of first-passage reliability yields an exact breakpoint. Under increasing concave transport actuation, the transport elasticity decreases monotonically, so there is at most one transition. An explicit initial-elasticity condition determines whether complementarity exists; a critical normalized deadline determines when it disappears; and the competitive branch is the true Pareto boundary after time sharing. A finite-frame LTI-Poisson channel exhibits the same transition in the presence of counting noise and intersymbol interference.

The resulting fast--slow communication therefore follows a sequence: at very early times the fast stream is effectively alone; at intermediate times it enables the slow stream; and at long times the two streams compete. This sequence is a property of all communication systems in which information and transport are physically coupled.

There are several possible extensions. A symbol-dependent fast waveform could actuate the full molecular impulse response, creating an action-controlled channel with memory. Shared physiological state would correlate the noise of the two streams, and the fast observation could provide side information for slow-channel state estimation. Receptor binding would add finite sensing capacity and biochemical noise. Finally, if the two streams describe the same event, the channel problem should be combined with appropriate notions of  source coding.

\bibliographystyle{IEEEtran}
\bibliography{fast_slow_endogenous_transport}

\appendices
\section{Proof of the First-Passage Reliability Theorem}
\label{app:reliability}
Let $\phin$ and $\Phin$ denote the standard normal density and cdf. Set
\begin{equation}
 c=\sqrt{\frac{\theta}{2}},
 \quad s=\Phin(A),
 \quad r=e^p\Phin(B),
 \label{eq:csr}
\end{equation}
so that $Q=s+r$. The identity
\begin{equation}
 \phin(A)=e^p\phin(B)
 \label{eq:densityidentity}
\end{equation}
follows directly from $B^2-A^2=2p$.

Differentiating \eqref{eq:Q} with respect to $p$, the two density terms cancel by \eqref{eq:densityidentity}, giving
\begin{equation}
 Q_p=r=e^p\Phin(B)>0.
 \label{eq:appQp}
\end{equation}
Consequently,
\begin{equation}
 \ell_\theta(p)=\tfrac{r}{s+r}.
 \label{eq:ellratio}
\end{equation}
Let $x=-A$ and $y=-B$. Then $y\ge x$ and $(y^2-x^2)/2=p$. The function
\begin{equation}
 g(z)=e^{z^2/2}\Phin(-z)
\end{equation}
is strictly decreasing: for $z\le0$ the derivative is plainly negative, and for $z>0$ the inequality $z\Phin(-z)<\phin(z)$ follows from
\begin{equation}
 \Phin(-z)=\int_z^\infty\phin(t)\dd t
 <\int_z^\infty\tfrac{t}{z}\phin(t)\dd t
 =\tfrac{\phin(z)}{z}.
\end{equation}
Therefore
\begin{equation}
 r=e^p\Phin(-y)\le \Phin(-x)=s,
\end{equation}
with equality only at $p=0$. Equation \eqref{eq:ellratio} proves \eqref{eq:ellbound}.

For strict log-concavity, differentiate once more:
\begin{equation}
 Q_{pp}=r-c\phin(A).
 \label{eq:Qpp}
\end{equation}
Let $\lambda(z)=\phin(z)/\Phin(z)$. Using \eqref{eq:densityidentity},
\begin{align}
 \frac{QQ_{pp}-Q_p^2}{sr}
 &=1-c\bigl[\lambda(A)+\lambda(B)\bigr].
 \label{eq:logconcavitycalc}
\end{align}
For every real $z$, $\lambda(z)>-z$. This is immediate for $z\ge0$ and, for $z<0$, follows from the same Gaussian tail bound. Since
\begin{equation}
 A+B=-\tfrac{1}{c},
\end{equation}
we obtain $c[\lambda(A)+\lambda(B)]>1$. Hence $QQ_{pp}-Q_p^2<0$, which is exactly \eqref{eq:logconcave}.

To establish monotonicity in $\theta$, let $U=D\tau/L^2$ be the
normalized first-passage time. Its density is
\begin{equation}
f_p(u)=\frac{1}{\sqrt{4\pi u^3}}
\exp\!\left[-\frac{(1-pu)^2}{4u}\right],
\qquad u>0.
\label{eq:fp-normalized-density}
\end{equation}
Thus $Q(\theta,p)=\int_0^\theta f_p(u)\,du$ and
$\partial_p f_p(u)=\tfrac12(1-pu)f_p(u)$.
Differentiation under this integral is justified because, for
$p$ in a compact interval and $0<u\le\theta$, both $f_p(u)$
and $|\partial_p f_p(u)|$ are bounded by a constant times
$u^{-3/2}e^{-1/(4u)}$.
Writing
$\mu_p(\theta)=\mathbb{E}_p[U\mid U\le\theta]$, where the
subscript denotes the drift parameter, gives
\begin{equation}
\begin{aligned}
\ell_\theta(p)
&=\frac{1}{2Q(\theta,p)}
  \int_0^\theta (1-pu)f_p(u)\,du\\
&=\frac12-\frac p2\,\mu_p(\theta).
\end{aligned}
\label{eq:fp-conditional-sensitivity}
\end{equation}
Since $f_p$ is positive on $(0,\infty)$,
$0<\mu_p(\theta)<\theta$. Differentiating the ratio defining
this conditional mean yields
\begin{equation}
\mu_p'(\theta)
=\frac{f_p(\theta)}{Q(\theta,p)}
 \bigl[\theta-\mu_p(\theta)\bigr]>0.
\label{eq:fp-truncated-mean-derivative}
\end{equation}
Consequently, for $p>0$,
\begin{equation}
\frac{\partial\ell_\theta(p)}{\partial\theta}
=-\frac{p f_p(\theta)}{2Q(\theta,p)}
 \bigl[\theta-\mu_p(\theta)\bigr]<0.
\label{eq:fp-deadline-derivative}
\end{equation}
The bound $0<\mu_p(\theta)<\theta$ also gives
$\ell_\theta(p)\to 1/2$ as $\theta\downarrow0$.
For $p>0$, the density has an exponential tail, which permits
differentiation of its normalization:
\begin{equation}
0=\int_0^\infty \partial_p f_p(u)\,du
=\frac12-\frac p2\,\mathbb{E}_p[U].
\label{eq:fp-unconditional-mean}
\end{equation}
Hence $\mathbb{E}_p[U]=1/p$. As $\theta\to\infty$,
monotone convergence of the truncated first moment, together
with $Q(\theta,p)\to1$, gives $\mu_p(\theta)\to1/p$.
Equation~\eqref{eq:fp-conditional-sensitivity} therefore gives
$\ell_\theta(p)\to0$. At $p=0$, the same identity instead gives
$\ell_\theta(0)=1/2$ for every finite $\theta$.

\section{Proof of the Exact Allocation Theorem}
\label{app:allocation}
Differentiate \eqref{eq:elasticity}:
\begin{align}
 \Ecal_\theta'(a)
 &=-p'(a)\ell_\theta(p(a))
 +(\Gamma-a)p''(a)\ell_\theta(p(a))\nonumber\\
 &\quad +(\Gamma-a)[p'(a)]^2\ell_\theta'(p(a)).
 \label{eq:Eprime}
\end{align}
The first term is strictly negative; the second is nonpositive by concavity of $p$; and the third is strictly negative by strict log-concavity of $Q$. Thus $\Ecal_\theta$ is strictly decreasing. Moreover, $\Ecal_\theta(a)\to0$ as $a\uparrow\Gamma$. Equation \eqref{eq:Sprimeelasticity} therefore implies that a positive-derivative interval exists exactly when $\Ecal_\theta(0)>1$, and then there is a unique crossing $a_c$.

For concavity on the decreasing branch, abbreviate $Q=Q(\theta,p(a))$ and $b=\Gamma-a$. Twice differentiating \eqref{eq:Stheta} gives
\begin{equation}
 \tfrac{S_\theta''(a)}{\beta}
 =-2Q_p p'+bQ_p p''+bQ_{pp}(p')^2.
 \label{eq:Ssecond}
\end{equation}
Strict log-concavity implies
\begin{equation}
 \tfrac{Q_{pp}}{Q_p}<\tfrac{Q_p}{Q}=\ell_\theta.
 \label{eq:Qppbound}
\end{equation}
When $a\ge a_c$, $\Ecal_\theta(a)=bp'\ell_\theta\le1$. Hence
\begin{align}
 \tfrac{S_\theta''(a)}{\beta}
 &< Q_p p'[-2+b p'\ell_\theta]+bQ_p p''\nonumber\\
 &= Q_p p'[-2+\Ecal_\theta]+bQ_p p''\nonumber\\
 &\le -Q_p p'<0,
 \label{eq:Sconcave}
\end{align}
where the last inequality uses $\Ecal_\theta\le1$ and $p''\le0$.
Thus $S_\theta$ is strictly concave and decreasing on $[a_c,\Gamma]$.

Let $G(r)=S_\theta(F^{-1}(r))$. Where derivatives exist,
\begin{equation}
 G''(r)=\frac{S_\theta''F'-S_\theta'F''}{(F')^3}.
 \label{eq:Gsecond}
\end{equation}
On the decreasing branch, $S_\theta''<0$, $S_\theta'\le0$, $F'>0$, and $F''\le0$, so $G''<0$. The hypograph below this branch is therefore convex. Points with $a<a_c$ are dominated by $a_c$, and time sharing cannot lift the decreasing branch. It is the Pareto boundary of \eqref{eq:capacityregion}. Nondifferentiable concave $C_f$ may be handled by one-sided derivatives or subgradients.

\section{Numerical Capacity Evaluation}
\label{app:numerics}
For the LTI-Poisson calculation, the $2^m$ binary release vectors are the input alphabet. Given an input vector, \eqref{eq:poissonchannel} yields a product distribution over slot counts. After tail aggregation, this is a finite DMC with transition matrix $W$. For each $a$, we solve
\begin{equation}
 \max_{P_X}\ I(P_X,W)
 \quad\text{subject to}\quad
 \sum_x P_X(x)c(x)\le\Gamma-a.
 \label{eq:dmcoptimization}
\end{equation}
Mutual information is concave in $P_X$, so every optimum of \eqref{eq:dmcoptimization} is global. The reported values use a constrained convex optimization over the probability simplex.
\end{document}